\documentclass[conference]{IEEEtran}
\usepackage{cite}
\usepackage[cmex10]{amsmath}
\usepackage{amsfonts,amssymb}
\pagestyle{headings}
\usepackage{amsmath,amssymb}
\usepackage{amsthm}
\usepackage{color}
\usepackage[mathscr]{eucal}
\usepackage{graphics,graphicx,multicol}
\usepackage{epsfig}
\usepackage{enumerate}
\usepackage{subfigure}
\usepackage{algorithmic}
\usepackage[section]{algorithm}
\usepackage{morefloats}
\usepackage{enumerate}
\usepackage{subfigure}
\usepackage{multirow}
\usepackage{xfrac}
\usepackage{array}
\usepackage{pstricks, pst-node, pst-plot, pst-circ}
\usepackage{pst-3dplot,pstricks-add,pst-solides3d,pst-grad}
\usepackage{moredefs}
\usepackage{cite}
\usepackage{courier}
\newtheorem{thm}{Theorem}[section] 
\newtheorem{lem}[thm]{Lemma}

\newtheorem{rem}[thm]{Remark}
\pagenumbering{arabic}
\pagestyle{plain}

\providecommand{\mbf}[1]{\mathbf{#1}}						
\providecommand{\wt}[1]{\widetilde{#1}}					

\providecommand{\bsym}[1]{\boldsymbol{#1}}				
\providecommand{\bsymwt}[1]{\widetilde{\boldsymbol{#1}}}	
\providecommand{\mbb}[1]{\mathbb{#1}}

\begin{document}
\title{Network MIMO with Partial Cooperation between Radar and Cellular Systems}
\author{Ahmed Abdelhadi and T. Charles Clancy \\
Hume Center, Virginia Tech, Arlington, VA, 22203, USA\\
\{aabdelhadi, tcc\}@vt.edu
}
\maketitle

\begin{abstract}
To meet the growing spectrum demands, future cellular systems are expected to share the spectrum of other services such as radar.  In this paper, we consider a network multiple-input multiple-output (MIMO) with partial cooperation model where radar stations cooperate with cellular base stations (BS)s to deliver messages to intended mobile users. So the radar stations act as BSs in the cellular system. However, due to the high power transmitted by radar stations for detection of far targets, the cellular receivers could burnout when receiving these high radar powers. Therefore, we propose a new projection method called small singular values space projection (SSVSP) to mitigate these harmful high power and enable radar stations to collaborate with cellular base stations. In addition, we formulate the problem into a MIMO interference channel with general constraints (MIMO-IFC-GC). Finally, we provide a solution to minimize the weighted sum mean square error minimization problem (WSMMSE) with enforcing power 
constraints on both radar and cellular stations.  
\end{abstract}
\IEEEpeerreviewmaketitle

\begin{keywords}
MIMO Radar, Small Singular Values Space Projection, Radar Cellular Coexistence, Network MIMO
\end{keywords}

\section{Introduction}

 Federal Communications Commission (FCC) and the National Telecommunications and Information Administration (NTIA) studies show very low utilization of huge chunks of spectrum held by the federal agencies, especially in urban areas. Meanwhile, there is a very heavy utilization of spectrum held by commercial operators, e.g. cellular operators, in these urban areas. President's Council of Advisors on Science and Technology (PCAST) recommendations in order to efficiently utilize federal spectrum are to share federal spectrum with commercial operators \cite{PCAST12}. The sharing will result in enormous economic and social advances for the nation. Meanwhile, this sharing should not endanger the main mission of federal incumbents, e.g. sharing radar spectrum should not affect its target tracking capabilities. Therefore, new approaches should be developed with these considerations in mind.

A recent report by NTIA \cite{NTIA12} concluded that sharing radar spectrum with WiMAX requires huge exclusion zones up to tens of kilometers to protect the WiMAX receivers from harmful interference signal transmitted by radar. This is due to WiMAX receivers are designed to handle low power levels in the range of Watts while the power transmitted by radar is in the range of Kilo and Mega Watts. This includes shipborne radars that are deployed on military ships on the east and west coasts of the United States. Which in turn results in depriving these areas, i.e. where the majority of the US population live, from the benefits of sharing radar spectrum. 

On the other hand, within the cellular system, interference is a major obstacle against achieving the spectral efficiency expected from developed multiple-antenna techniques \cite{linear_precoding_journal}. It is shown in \cite{DMP_Inter1,MIMO_Inter} that multiple-input multiple-output (MIMO) capacity gains are deteriorated due to inter-cellular interference. In a radar/cellular coexistence scenario, radar receivers have highly sensitive receivers for detecting reflected signals from far targets. Therefore, it is highly susceptible to interference from commercial wireless system operating on radar bands. In the past, radar has been guaranteed exclusive rights to radio spectrum allocation to avoid its operation from being affected by commercial wireless systems interference \cite{NTIA12, KAC14WTS}. Therefore, a radar/cellular network-level interference management is of fundamental importance to sustain the radar/cellular coexistence along with limiting inter-cellular interference and harnessing the advantages 
of cellular MIMO technology. 

To address the aforementioned challenges, we propose a novel coexistence scenario and model between radar and cellular system. In this model, the radar signal is steered to null-space plus small singular values space of the interference channel between the radar and cellular system. The approach benefits both radar and cellular systems. On the radar side, it will increase the projection space dimensions and therefore radar performance metrics are improved compared to projection with smaller dimensions, e.g. null space projection \cite{KAC+14ICNC}, see \cite{KAC_QPSK} for more details. On the cellular side, this approach suppresses the high power of radar in the direction of cellular network so it does not burn out the cellular receivers. In addition, the transmitted radar signal could be used to transmit communication messages to enhance the overall system performance and quality of service (QoS) of cellular system. In our model, we propose network MIMO with partial cooperation for merging radar stations in 
the cellular network.

\subsection{Related Work}

To benefit from radar spectrum, researchers have proposed the use of spatial domain to mitigate MIMO radar interference to communication system \cite{KAC+14ICNC}. One of the studies proposed projection of radar signal into the null space of the interference channel between radar and communication systems \cite{KAC14_MILCOM}. In another study, researchers designed radar waveform that doesn't cause harmful interference with communication system and successfully achieves the radar mission objectives \cite{KAC14DySPANWaveform}. In the past, sharing of government bands has been allowed for commercial wireless systems under the condition of low power transmission in order to protect incumbent from harmful interference \cite{KAC_IEEE_Sensors}. Famous examples are WiFi and Bluetooth at 2450-2490 MHz band, wireless local area network (WLAN) at the 5.25-5.35 and 5.47-5.725 GHz \cite{FCC_5GHz_Radar06}, and the recently proposed 3550-3650 MHz radar band for small cells usage, i.e. wireless BSs operating on low power \cite{FCC12_SmallCells}. 

In \cite{MultiCellMIMO_inter}, network MIMO, also known as multi-cell cooperation, has shown network-level interference management that significantly improve cellular systems performance. In network MIMO, multiple BSs cooperate their transmission to each user. Network MIMO can be reduced to MIMO broadcast channel (BC) in case of full cooperation between all BSs as shown in \cite{MIMO_capacity}. In another scenario, network MIMO can be reduced to a MIMO interference channel (MIMO-IFC) in case of absence of collaboration between BSs shown in \cite{MIMO_capacity,IA_DoF,MIMO_X_IA}. The general case is forming clusters of BSs that collaborate to transmit to a certain user   \cite{MIMO_X_IA,MIMO_downlink,MIMO_cluster, Ahmed_ITW10, Ahmed_INFOCOM10}.

\subsection{Our Contributions}

Our contributions in this paper are as follows:
\begin{itemize}
\item We propose a small singular values space projection method that facilitate coexistence between radar and cellular systems. 
\item We incorporate radar stations in the cellular system and show the equivalence of the new model to MIMO interference channel with general constraints (MIMO-IFC-GC) model for network MIMO with partial cooperation shown in \cite{linear_precoding_journal}.
\item We provide a suboptimal solution of the weighted sum-MSE minimization (WSMMSE) problem in Section \ref{sec:opt} for our proposed model. 
\end{itemize}

\textit{Notation}: Matrices and vectors are denoted by bold upper and lower case letters, respectively. Transpose and Hermitian operators are denoted by $(\cdot)^T$, and $(\cdot)^H$, respectively. 

The paper is organized as follows. Section~\ref{sec:sys_model} discusses system model for MIMO downlink system with radar and cellular coexistence. Moreover, it discussed the user message precoding at radar and cellular stations. Section \ref{sec:projection_matrix} describes how to construct small singular values space projection matrix. We show the equivalence of our proposed model with MIMO-IFC-GC in Section \ref{sec:equ_model}. Section \ref{sec:opt} contains the WSMMSE minimization problem under investigation and Section \ref{sec:min} presents its solution. Section~\ref{sec:conc} concludes the paper.

\section{System Model}\label{sec:sys_model}

We consider a MIMO downlink system with $L$ radar stations, including shipborne radars, forming a set $\mathcal{L}$, $M$ cellular base stations (BSs) forming a set $\mathcal{M}$, and $K$ mobile users forming a set $\mathcal{K}$, see Figure \ref{fig:system_model}. Each BS has $n_t$ antennas for transmission, each radar station has $n_{\text{\text{rad}}}$ antennas for transmission, and each mobile user has $n_r$ antennas. The $m$th BS has the messages for users set $\mathcal{K}_m \subseteq \mathcal{K}$ where $|\mathcal{K}_m| = K_m$. Similarly, the $l$th radar has the messages for users set $\mathcal{K}_l \subseteq \mathcal{K}$ where $|\mathcal{K}_l| = K_l$. Therefore, the $k$th user receives its intended message from a subset of $M_k$ BSs $\mathcal{M}_k \subseteq \mathcal{M}$ and a subset of radar stations $L_k$ radars $\mathcal{L}_k \subseteq \mathcal{L}$. In total, $k$th user receives its message from $L_k + M_k$ stations $\mathcal{M}_k \cup \mathcal{L}_k \subseteq \mathcal{M} \cup \mathcal{L}$. This channel is generally referred to as MIMO interference channel with partial message sharing, see \cite{linear_precoding_journal}. If $\mathcal{K}_m$, or $\mathcal{K}_l$, contains one user for each transmitter $m$, or $l$, then the model reduces to a standard MIMO interference channel (MIMO-IFC). When all transmitters cooperate in transmitting to all the users, i.e. $M_k = M$ and $L_k = L$, then we have MIMO broadcast channel (MIMO-BC), when number of some transmitters cooperate, i.e. $M_k < M$ or $L_k < L$, then we have multicast interference channel \cite{Ahmed_ITW10}. In this paper, we consider MIMO interference channel with partial message sharing (MIMO-IFC-PMS).

\begin{figure}
\centering
\includegraphics[width=\linewidth]{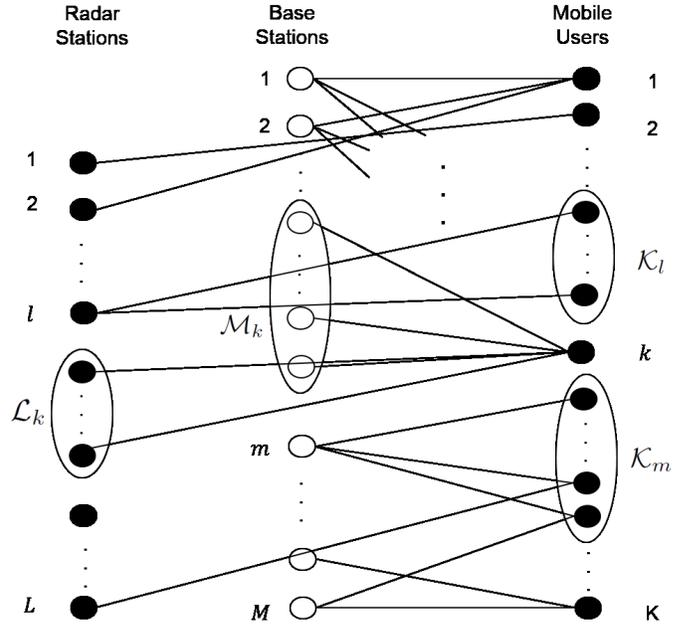}
\caption{System Model.} 
\label{fig:system_model}
\end{figure}

\subsection{Precoding}

 We define ${\mbf u_k = [u_{k,1} ...u_{k,d_k} ]^T} \in \mathbb{C}^{d_k}$  to represent the $d_k \leq \min({L_k} {n_{\text{\text{rad}}}} + {M_k} {n_t} ,n_r )$ independent streams sent to user $k$. It is assumed that $\mbf u_k \sim \mathcal{CN}(0,\mbf I)$. The data streams $\mbf u_k$ are known to all the cellular base stations in the set $\mathcal{M}_k$ and all the radar stations in the set $\mathcal{L}_k$. Assuming $l \in \mathcal{L}_k \subset \mathcal{L}$, the $l$th radar station precodes vector $\mbf u_k$ via a matrix $\mbf F_{k,l} \in \mathbb{C}^{ n_{\text{rad}} \times d_k} $, then projects it using projection matrix $\mbf P_{l} \in \mathbb{C}^{n_{\text{rad}} \times n_{\text{rad}}}$, which is described in Section \ref{sec:projection_matrix}. The signal $\tilde{\mbf x}_{l} \in \mathbb{C}^ {n_{\text{rad}}}$ sent by radar station and received by the user can be given as,
\begin{equation*}
\tilde{\mbf x}_{l}=\mbf P_{l}\sum_{k\in K_l} \mbf F_{k,l}\mbf u_k 
\end{equation*}
Assuming $P_l$ is the allowed radar power level to the communication system, then power constraint is given by
\begin{eqnarray*}
\mathbb{E}\left[\left\|{\tilde{\mbf x}_{l}}\right\|^2\right]&=& \text{tr} \left\{\mathbb{E}\left[\tilde{\mbf x}_{l} {\tilde{\mbf x}^{H}_{l}}\right]\right\}\\ \notag
&=&\sum_{k\in K_l}  \text{tr}\left\{\mbf P_{l} \mbf F_{k,l} \mbf F^{H}_{k,l} \mbf P^{H}_{l} \right\} \leq P_l, l=1,\cdots,L.
\end{eqnarray*}
Similarly, assuming $m \in \mathcal{M}_k $, for the $m$th base station, we have
\begin{equation}
\tilde{\mbf x}_m=\sum_{k\in K_m} \mbf F_{k,m} \mbf u_k.
\end{equation}
and,
\begin{eqnarray}
\mathbb{E}\left[\left\|{\tilde{\mbf x}_m}^2\right\|\right]&=& \text{tr} \left\{E\left[\tilde{\mbf x}_m {\tilde{\mbf x}^{H}_{m}}\right]\right\}\\ \notag
&=&\sum_{k\in K_m} \text{tr} \left\{\mbf F_{k,m} \mbf F^{H}_{k,m}\right\} \leq P_m, m=1,\cdots,M.
\end{eqnarray}
The $k$th user receives the following signal:
\begin{eqnarray}
{\mbf y}_k & = & \sum^{L}_{l=1} \tilde{\mbf H}_{k,l} {\tilde{\mbf x}_{l}} + \sum^{M}_{m=1} \tilde{\mbf H}_{k,m} {\tilde{\mbf x}_{m}} + {\tilde{\mbf n}_{k}}\\ \notag
           & = &\sum_{l\in \mathcal{L}_k} \tilde{\mbf H}_{k,l} \mbf P_{l} {{\mbf F}_{k,l}} {{\mbf u}_{k}}+\sum_{o\neq k}\sum_{j\in \mathcal{L}_o} \tilde{\mbf H}_{o,j} \mbf P_{o}{{\mbf F}_{o,j}} {{\mbf u}_{o}} \\ \notag
           & & + \sum_{m\in \mathcal{M}_k} \tilde{\mbf H}_{k,m} {{\mbf F}_{k,m}} {{\mbf u}_{k}} + \sum_{o\neq k}\sum_{j\in \mathcal{M}_o} \tilde{\mbf H}_{o,j} {{\mbf F}_{o,j}} {{\mbf u}_{o}} + {\tilde{\mbf n}_{k}}
\end{eqnarray}
where $\tilde{\mbf H}_{k,l} \in \mathbb{C}^{ n_r \times n_{\text{rad}}}$ and $\tilde{\mbf H}_{k,m} \in \mathbb{C}^{ n_r \times n_t}$ are the channel matrices between the $l$th radar station and $k$th user and the $m$th BS and $k$th user, respectively, and $\tilde{\mbf n}_{k}$ is noise where $\tilde{\mbf n}_{k} \sim \mathcal{CN}(0,\mbf I)$. 

\section{Small Singular Values Space Projection}\label{sec:projection_matrix}

In this section, we design our projection matrix such that steering the radar power in the direction of small singular values results in radar power that is in the order of communication transmit power, i.e. $P_l = \sigma_{\text{th}}P_{\text{rad}} = \Theta(P_{m}) \quad \forall m, l$ where $P_{\text{rad}}$ is the radar transmit power and $\sigma_{\text{th}}$ is a singular value threshold. As a result, the diminished radar power received at the communication system will not burnout or saturate the communication receivers. Given that the $l$th MIMO radar has channel state information of $\tilde{\mbf H}_{k,l} $ and the $k$th user. Let $(o)_l$ be the $o$th user in the set $\mathcal{K}_l$. Therefore the augmented channel matrix between the $l$th radar and the set of users $\mathcal{K}_l$ is given by
\begin{equation}
\tilde{\mbf H}_l = [\tilde{\mbf H}_{(1)_l,l}^{T} \tilde{\mbf H}_{(2)_l,l}^{T} \cdots \tilde{\mbf H}_{(K_l)_l,l}^{T}]^{T}.
\end{equation}
We proceed by first finding SVD of $\tilde{\mbf H}_l$, i.e., 
\begin{equation}
\tilde{\mbf H}_l = \mbf U_l \bsym \Sigma_l \mbf V_l^H.
\end{equation}
Now, let us define 
\begin{equation}
\bsymwt \Sigma_l \triangleq \text{diag} (\wt \sigma_{l,1}, \wt \sigma_{l,2}, \ldots, \wt \sigma_{l,p})
\end{equation}
where $p \triangleq \min (K_l n_r, n_{\text{rad}})$ and 
$\wt \sigma_{l,1} > \wt \sigma_{l,2} > \cdots > \wt \sigma_{l,q} > \wt \sigma_{l,q+1} = \wt \sigma_{l,q+2} = \cdots = \wt \sigma_{l,p} = 0$ are the singular values of $\tilde{\mbf H}_l $. Next, we define
\begin{equation}
\bsymwt {\Sigma}_l^\prime \triangleq \text{diag} (\wt \sigma_{l,1}^\prime,\wt \sigma_{l,2}^\prime, \ldots, \wt \sigma_{l,n_{\text{rad}}}^\prime)
\end{equation}
where
\begin{align}
\wt \sigma_{l,u}^\prime \triangleq
\begin{cases}
0, \quad \text{for} \; \wt \sigma_{l,u}^\prime > \sigma_{\text{th}},\\
1, \quad \text{for} \; \wt \sigma_{l,u}^\prime \leq \sigma_{\text{th}}.
\end{cases}
\end{align}
Using above definitions, we can now define our small singular values space projection matrix $\mbf P_l \triangleq \mbf V_l \bsymwt \Sigma_l^\prime \mbf V_l^H$. Note that $\mbf P_l \in \mbb C^{n_{\text{rad}} \times n_{\text{rad}}}$ is a projection matrix as it satisfies $\mbf P_l^H = (\mbf V_l \bsymwt \Sigma_l^\prime \mbf V_l^H)^H = \mbf P_l$ and $\mbf P_l^2 = \mbf V_l \bsymwt \Sigma_l \mbf V^H_l \times \mbf V_l \bsymwt \Sigma_l \mbf V^H_l = \mbf P_l$. 


\section{Equivalence with MIMO-IFC-GC}\label{sec:equ_model}

The MIMO-IFC-GC consists of $K$ transmitters with $m_{t,k}$ antennas at the the $k$th transmitter and $K$ receivers with $m_{r,k}$ antennas at the $k$th receiver. 
The $k$th receiver received signal is given by:
\begin{equation}
{\mbf y_k} = {\mbf H_{k,k}} {\mbf x_k} + \sum_{o\neq k}{\mbf H_{k,o}} {\mbf x_o} + \mbf n_k  
\end{equation}
where ${\mbf n}_{k}$ is additive complex Gaussian noise ${\mbf n}_{k} \sim \mathcal{CN}(0,\mbf I)$, $\mbf x_k \in \mathbb{C}^{m_{t,k}}$  are the inputs to receiver and $\mbf H_{k,o} \in \mathbb{C}^{m_{r,k}\times m_{t,k}}$ is the channel matrix between the $o$th transmitter and the $k$th receiver. The $k$th user intended information stream vector is $\mbf u_k \in \mathbb{C}^{d_k}$ where $d_k \leq \min(m_{t,k},m_{r,k})$ and ${\mbf u}_{k} \sim \mathcal{CN}(0,\mbf I)$. In this model, the $k$th user precoding matrix is given by $\mbf F_k \in \mathbb{C}^{m_{t,k} \times d_k}$ therefore $\mbf x_k = \mbf F_k \mbf u_k $. The input vectors $\mbf x_k$ have to satisfy both the $L$ generalized linear constraints given by
\begin{equation*}
\sum^{K}_{k=1} \text{tr}\left\{\mathbf{\Phi}_{k,l} \mathbb{E}\left[\textbf{x}_k {\textbf{x}^{H}_{k}}\right]\right\} =\sum^{K}_{k=1} \text{tr}\left\{\mathbf{\Phi}_{k,l} \textbf{F}_k {\textbf{F}^{H}_{k}}\right\} \leq P_l
\end{equation*}
and $M$ generalized linear constraints given by
\begin{equation}
\sum^{K}_{k=1} \text{tr}\left\{{\mathbf{\Phi}}_{k,m} E\left[\mbf x_k {\mbf x^{H}_{k}}\right]\right\} =\sum^{K}_{k=1} \text{tr}\left\{{\mathbf{\Phi}}_{k,m} \mbf F_k {\mbf F^{H}_{k}}\right\} \leq P_m
\end{equation}
for $\mathbf{\Phi}_{k,l} \in \mathbb{C}^{m_{t,k} \times m_{t,k}}$ and $l = 1,...,L$ and $\mathbf{\Phi}_{k,m} \in \mathbb{C}^{m_{t,k} \times m_{t,k}}$ and $m = 1,...,M$ are  weight matrices where $\sum^{L}_{l=1} \mathbf{\Phi}_{k,l} + \sum^{M}_{m=1} \mathbf{\Phi}_{k,m}$ are positive definite for all $k = 1,...,K$.  
\begin{lem}
Assume that the $o^{th}$ base station or radar station in subset $\mathcal{M}_k \cup \mathcal{L}_k$ is given by the index $(o)_k$ is informed about user $k$'s message. The MIMO-IFC-PMS is a special case of a MIMO-IFC-GC with $m_{r,k} = n_r$, $m_{t,k} = M_k n_t + L_k n_{\text{\text{rad}}}$, channel matrices
\begin{equation*}
\begin{aligned}
&\mbf H_{k,o} = & &[\tilde{\mbf H}_{k,(1)_o} \cdots \tilde{\mbf H}_{k,(M_k)_o} \tilde{\mbf H}_{k,(M_k+1)_o}\mbf P_{(M_k+ 1)_o} \\
&  & & \cdots \tilde{\mbf H}_{k,(M_k + L_k)_o}\mbf P_{(M_k + L_k)_o} ]
\end{aligned}
\end{equation*}
augmented precoding (beamforming) matrices,
\begin{equation*}
\mbf F_{k}={\left[{\mbf F}^{T}_{k,\left(1\right)_k} \cdots {\mbf F}^{T}_{k,\left(M_k\right)_k} {\mbf F}^{T}_{k,\left(1+M_k\right)_k} \cdots {\mbf F}^{T}_{k,\left(M_k+ L_k\right)_k}\right]}^T,
\end{equation*}
and  weight matrices $\mathbf{\Phi}_{k,l}$ with the $o^{th}$ $n_{\text{rad}} \times n_{\text{rad}}$ submatrix on the main diagonal is $\textbf{P}_l^H\textbf{P}_l$ , if $l = (o)_k$ and the rest of the matrix elements are zeros and $\mathbf{\Phi}_{k,m}$ with the $o^{th}$ $n_t \times n_t$ submatrix on the main diagonal is $\mbf I_{n_t}$ , if $m = (o)_k$ and the rest of the matrix elements are zeros. If $k \notin \mathcal{K}_m$ then $\mathbf{\Phi}_{k,m} = \mathbf{0}$ and if $l \notin \mathcal{K}_l$ then $\mathbf{\Phi}_{k,l} = \mathbf{0}$.
\end{lem}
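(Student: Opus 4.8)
The plan is to prove the claim by explicit construction, verifying that with the stated definitions the MIMO-IFC-PMS received-signal equation and its two families of power constraints collapse exactly onto the MIMO-IFC-GC received-signal equation and its generalized linear constraints. Since both models already fix $\mbf u_k \sim \mathcal{CN}(0,\mbf I)$ and unit-variance noise, and since $m_{r,k}=n_r$, $m_{t,k}=M_k n_t + L_k n_{\text{rad}}$ are declared to match the aggregate antenna counts, the entire argument reduces to two block-matrix identities: one for the channel/precoder products and one for the trace constraints.

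First I would establish the signal equation. Partitioning $\mbf F_o$ conformally into its $M_o$ blocks of size $n_t \times d_o$ followed by its $L_o$ blocks of size $n_{\text{rad}} \times d_o$, block multiplication of the augmented $\mbf H_{k,o}$ by $\mbf F_o$ yields
\begin{equation*}
\mbf H_{k,o}\mbf F_o = \sum_{m\in\mathcal{M}_o}\tilde{\mbf H}_{k,m}\mbf F_{o,m} + \sum_{l\in\mathcal{L}_o}\tilde{\mbf H}_{k,l}\mbf P_l\mbf F_{o,l},
\end{equation*}
because the projection factors $\mbf P_l$ are already absorbed into the radar columns of $\mbf H_{k,o}$. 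Multiplying by $\mbf u_o$ and separating the term $o=k$ from the terms $o\neq k$ reproduces, summand by summand, the four groups of the MIMO-IFC-PMS received signal (desired BS, desired radar, interfering BS, interfering radar), so that $\mbf y_k = \mbf H_{k,k}\mbf F_k\mbf u_k + \sum_{o\neq k}\mbf H_{k,o}\mbf F_o\mbf u_o + \tilde{\mbf n}_k$ holds identically with $\mbf x_o = \mbf F_o\mbf u_o$.

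Next I would match the constraints. Because $\mathbf{\Phi}_{k,l}$ is block-diagonal with its only nonzero block equal to $\mbf P_l^H\mbf P_l$ in the slot indexed by $(o)_k=l$, the trace $\text{tr}\{\mathbf{\Phi}_{k,l}\mbf F_k\mbf F_k^H\}$ receives a contribution only from that slot, giving $\text{tr}\{\mathbf{\Phi}_{k,l}\mbf F_k\mbf F_k^H\}=\text{tr}\{\mbf P_l^H\mbf P_l\mbf F_{k,l}\mbf F_{k,l}^H\}$. Invoking the Hermitian and idempotent properties $\mbf P_l^H=\mbf P_l$, $\mbf P_l^2=\mbf P_l$ from Section~\ref{sec:projection_matrix} together with cyclic invariance of the trace, this equals $\text{tr}\{\mbf P_l\mbf F_{k,l}\mbf F_{k,l}^H\mbf P_l^H\}$, and summing over $k$ recovers $\sum_{k\in K_l}\text{tr}\{\mbf P_l\mbf F_{k,l}\mbf F_{k,l}^H\mbf P_l^H\}\leq P_l$. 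The BS case is identical but simpler: the block is $\mbf I_{n_t}$, so the trace collapses to $\sum_{k\in K_m}\text{tr}\{\mbf F_{k,m}\mbf F_{k,m}^H\}\leq P_m$. The conventions $\mathbf{\Phi}_{k,m}=\mathbf{0}$ for $k\notin\mathcal{K}_m$ and $\mathbf{\Phi}_{k,l}=\mathbf{0}$ for $k\notin\mathcal{K}_l$ merely drop from each sum the users a station does not serve.

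The step I expect to need the most care is the positive-definiteness requirement on $\sum_{l}\mathbf{\Phi}_{k,l}+\sum_{m}\mathbf{\Phi}_{k,m}$ that the MIMO-IFC-GC definition imposes. The base-station slots contribute full-rank identity blocks, but each radar slot contributes $\mbf P_l^H\mbf P_l=\mbf P_l$, which is a rank-deficient projection, so the assembled block-diagonal weight is only positive semidefinite, with its kernel lying in the null spaces of the radar projections. I would resolve this by observing that those kernel directions are precisely the components of $\mbf F_{k,l}$ annihilated by $\mbf P_l$, which contribute neither to the received signal nor to any constraint; restricting each radar precoder to the range of $\mbf P_l$ (equivalently, replacing $\mbf F_{k,l}$ by $\mbf P_l\mbf F_{k,l}$ without loss) renders the effective weight positive definite on the active subspace, thereby completing the identification of MIMO-IFC-PMS as a special case of MIMO-IFC-GC.
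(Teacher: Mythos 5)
Your proof is correct and takes the same route the paper intends: the paper's entire proof is the single sentence ``It is easy to show by inspection,'' and your block-multiplication of $\mbf H_{k,o}\mbf F_o$ against the conformal partition of the augmented precoder, together with the slot-by-slot trace computation using $\mbf P_l^H=\mbf P_l$ and $\mbf P_l^2=\mbf P_l$, is exactly what that inspection amounts to. Your final paragraph in fact goes beyond the paper: the positive-definiteness requirement on $\sum_{l}\mathbf{\Phi}_{k,l}+\sum_{m}\mathbf{\Phi}_{k,m}$ genuinely fails when the $\mbf P_l$ are rank-deficient projections (which is the whole point of the SSVSP construction), the paper never acknowledges this, and your restriction of the radar precoders to the range of $\mbf P_l$ repairs a gap in the lemma as stated rather than a gap in your own argument.
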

\begin{proof}
It is easy to show by inspection.
\end{proof}
%
\begin{rem}
The effect of small singular values space projection on the MIMO-IFC-GC equivalent model is only in the augmented channel matrix or the weight matrices  $\mathbf{\Phi}_{k,l}$ which are inputs to the optimization problem as shown in section \ref{sec:opt}.
\end{rem}

\section{Optimization Problem}\label{sec:opt}

Given the equivalence between MIMO-IFC-GC and the proposed Network MIMO with partial cooperation for radar and cellular systems model. The rest of the paper shows how to use a modified version of the receiver in \cite{linear_precoding_journal} for our new proposed system model. The $k$th user  uses the equalization matrix $\mbf G_k \in \mathbb{C}^{d_k \times m_{r,k}}$ to estimate its message $\mbf u_k$ as
\begin{equation}
\hat{\mbf u}_k=\mbf G^{H}_{k} \mbf y_k.
\end{equation}
Therefore, the Mean Square Error (MSE)-matrix for user $k$ is given by
\begin{equation}
\mbf E_k=\mathbb{E}\left[\left(\hat{\mbf u}_k - \mbf u_k \right) \left(\hat{\mbf u}_k - \mbf u_k \right)^H \right].
\end{equation}
Using the equivalent MIMO-IFC-GC model, the MSE-matrix can be written as
\begin{eqnarray}\label{eqn:error}
\notag \mbf E_k=\mbf G^{H}_{k} \mbf H_{k,k} \mbf F_k \mbf F^{H}_{k} \mbf H^{H}_{k,k} \mbf G_{k} - \mbf G^{H}_{k} \mbf H_{k,k} \mbf F_{k} \\  
-\mbf F^{H}_{k} \mbf H^{H}_{k,k} \mbf G_{k} + \mbf G^{H}_{k} {\mathbf{\Omega}}_{k} \mbf G_{k}+\mbf I_{k}
\end{eqnarray}
where 
\begin{equation}
{\mathbf{\Omega}}_{k}=\mbf I + \sum_{o\neq k} \mbf H_{k,o} \mbf F_{o} \mbf F^{H}_{o} \mbf H^{H}_{k,o}.
\end{equation}
For each user $k$, the equalization matrices can be evaluated using the MMSE solution as
\begin{equation}\label{eqn:equalization}
\mbf G_k=\left(\mbf H_{k,k} \mbf F_k \mbf F_k^{H} \mbf H_{k,k}^{H} + \mathbf{\Omega}_k]\right)^{-1}\mbf H_{k,k} \mbf F_k.
\end{equation}
In this paper, we focus on the weighted sum-MSE minimization (WSMMSE) problem given by the following: 
\begin{equation}\label{eqn:optimizaiton}
\begin{aligned}
& \underset{\mbf F_{k},\mbf G_{k},k = 1,\cdots,K}{\text{min}} & & \sum^{K}_{k=1} \text{tr} \left\{\mbf W_k \mbf E_k \right\}\\
& \text{subject to} & &  \sum^{K}_{k=1} \text{tr}\left\{{\mathbf{\Phi}}_{k,m} \mbf F_k {\mbf F^{H}_{k}}\right\} \leq P_m, m=1,\cdots,M\\
& & &  \sum^{K}_{k=1} \text{tr}\left\{{\mathbf{\Phi}}_{k,l} \mbf F_k {\mbf F^{H}_{k}}\right\} \leq P_l, l=1,\cdots,L
\end{aligned}
\end{equation}
where $\mbf W_k \in \mathbb{C}^{d_k \times d_k}$ are the diagonal weight matrices with non-negative weights. 

\section{MMSE Minimization}\label{sec:min}

The extended MMSE interference alignment (eMMSE-IA) technique applied to an interference channel with per-transmitter power constraints and where each receiver is endowed with multiple antenna \cite{linear_precoding_journal}, is extended here to include radar system coexisting with communication system. The technique starts by an arbitrarily $\mbf F_k$. Then, at each iteration $j$ the equalization matrix $\mbf G^{\left(j\right)}_{k}$ is evaluated using (\ref{eqn:equalization})
resulting in
\begin{equation}
\mbf G^{\left(j\right)}_{k} = \left(\mbf H_{k,k} \mbf F^{\left(j-1\right)}_{k} \mbf F^{\left(j-1\right)H}_{k} \mbf H^{H}_{k,k} + \mathbf{\mathbf{\Omega}}^{\left(j-1\right)}_{k}\right)^{-1}\mbf H_{k,k} \mbf F^{\left(j-1\right)}_{k}
\end{equation}
where,
\begin{equation}
\mathbf{\Omega}^{\left(j-1\right)}_{k} = \mbf I + \sum_{o\neq k} \mbf H_{k,o} \mbf F^{\left(j-1\right)}_{o} \mbf F^{\left(j-1\right)H}_{o} \mbf H^{H}_{k,o}
\end{equation}
Given the matrices $\mbf G^{\left(j\right)}_{k}$, the optimization problem (\ref{eqn:optimizaiton}) becomes
\begin{equation}\label{eqn:optimizaiton2}
\begin{aligned}
& \underset{\mbf F_{k},k = 1,\cdots,K}{\text{min}} & &  \sum^{K}_{k=1} \text{tr} \{\mbf W_k \mbf E^{\left(j\right)}_{k} \}\\
& \text{subject to} & &  \sum^{K}_{k=1} \text{tr} \left\{{\mathbf{\Phi}}_{k,m} \mbf F_k {\mbf F^{H}_{k}}\right\} \leq P_m, \forall m\\
& & &  \sum^{K}_{k=1} \text{tr}\left\{{\mathbf{\Phi}}_{k,l} \mbf F_k {\mbf F^{H}_{k}}\right\} \leq P_l, \forall l
\end{aligned}
\end{equation}
where $\mbf E^{\left(j\right)}_{k}$ as in (\ref{eqn:error}) with $\mbf G^{\left(j\right)}_{k}$ instead of $\mbf G_{k}$. For a fixed $\mbf G^{\left(j\right)}_{k}$, the optimization problem in (\ref{eqn:optimizaiton2}) is convex and therefore there exists a unique global optimal solution for $\mbf F^{\left(j\right)}_{k}$. Using KKT conditions, we have 
\begin{equation}
\begin{aligned}
& \notag \mbf F^{(j)}_{k} = & &
\Bigl(\sum^{K}_{o=1} \mbf H^{H}_{o,k} \mbf G^{(j)}_{o} \mbf W_{o} \mbf G^{(j)H}_{o} \mbf H_{o,k} + \sum_{m} \mu_m \mathbf{\Phi}_{k,m} +  \\
& & & \sum_{l} \mu_l \mathbf{\Phi}_{k,l}\Bigr)^{-1} \times \mbf H^{H}_{k,k} \mbf G^{(j)}_{k} \mbf W_{k}
\end{aligned}
\end{equation}
where $\mu_m \geq 0$ are Lagrangian multipliers satisfying
\begin{equation}
\mu_m\left(\sum^{K}_{k=1} \text{tr} \left\{\mathbf{\Phi}_{k,m} \mbf F^{\left(j\right)}_{k} \mbf F^{\left(j\right)H}_{k}\right\}-P_m\right)=0
\end{equation} 
and $\mu_l \geq 0$ are Lagrangian multipliers satisfying
\begin{equation}
\mu_l\left(\sum^{K}_{k=1} \text{tr} \left\{\mathbf{\Phi}_{k,l} \mbf F^{\left(j\right)}_{k} \mbf F^{\left(j\right)H}_{k}\right\}-P_l\right)=0.
\end{equation} 
Using $\mbf F^{\left(j\right)}_{k}$, the iterative algorithm continues with the $(j + 1)$th iteration.

\section{Conclusion}\label{sec:conc}

In this paper, we considered a network MIMO with partial cooperation model where radar stations cooperate with cellular base stations (BS)s to deliver messages to intended mobile users. We designed a new projection matrix to mitigate radar stations interference to cellular system. In addition, this projection provides useful enhancement to the cellular system performance and QoS when radar stations cooperate in communication messages delivery. We showed that our constructed model, i.e. the radar stations act as BSs in the cellular system, is equivalent to a MIMO interference channel under generalized linear constraints (MIMO-IFC-GC). Finally, we provided a solution to minimize the weighted sum mean square error minimization problem (WSMMSE) with enforcing power constraints on both radar and cellular stations.  

\section{Future Work}

Comparison with other MSE minimization schemes will be considered in our future work. In addition, we plan to conduct a study on the sum rate maximization improvement for the new proposed model.

\bibliographystyle{ieeetr}
\bibliography{pub}

\end{document}